\newtheorem{theorem}{Theorem}
\newtheorem{corollary}{Corollary}
\begin{document}

\title{Downlink Power Control for Massive MIMO Cellular Systems with Optimal User Association}

\author{\IEEEauthorblockN{ Trinh Van Chien, Emil Bj\"{o}rnson, and Erik G. Larsson}
\IEEEauthorblockA{Department of Electrical
    Engineering (ISY), Link\"{o}ping University, 581 83 Link\"{o}ping, Sweden\\
\{trinh.van.chien, emil.bjornson, erik.g.larsson\}@liu.se}
\thanks{This paper was supported by the European Union's Horizon 2020 research and innovation programme under grant agreement No 641985 (5Gwireless). It was also supported by ELLIIT and CENIIT.}
}

\maketitle

\begin{abstract}
This paper aims to minimize the total transmit power consumption for Massive MIMO (multiple-input multiple-output) downlink cellular systems when each user is served by the optimized subset of the base stations (BSs). We derive a lower bound on the ergodic spectral efficiency (SE) for Rayleigh fading channels and maximum ratio transmission (MRT) when the BSs cooperate using non-coherent joint transmission. We solve the joint user association and downlink transmit power minimization problem optimally under fixed SE constraints. Furthermore, we solve a max-min fairness problem with user specific weights that maximizes the worst SE among the users. The optimal BS-user association rule is derived, which is different from maximum signal-to-noise-ratio (max-SNR) association. Simulation results manifest that the proposed methods can provide good SE for the users using less transmit power than in small-scale systems and that the optimal user association can effectively balance the load between BSs when needed.
\end{abstract}

\IEEEpeerreviewmaketitle

\section{Introduction}
By $2017$ there will be seven trillion wireless devices and the fast growth will increase the global $\textrm{CO}_2$-equivalent emissions significantly \cite{Wang2014,Auer2011a}. Moreover, $80 \%$ of the power in current networks is consumed at the BSs \cite{Auer2011a}. The BS technology therefore needs to be redesigned to reduce the power consumption.
Many researchers have investigated how the physical layer transmissions can be optimized to reduce the transmitted signal power, while maintaining the quality-of-service (QoS); see \cite{   Sun2015, Bjornson2013d, Li2015} and references therein. However, the papers \cite{ Sun2015, Bjornson2013d, Li2015} are all optimizing the power with respect to the small-scale fading, which is practically questionable since the fading coefficients change rapidly (i.e., the powers must be reoptimized every few milliseconds) and since the fading can be efficiently mitigated by channel coding. In contrast, the small-scale fading has negligible impact on Massive MIMO systems, thanks to favorable propagation \cite{Bjornson2016b}, and closed-form expressions for the ergodic SE are available for linear precoding schemes \cite{Ngo2013a}. The power allocation can be optimized with respect to the large-scale fading in Massive MIMO \cite{Larsson2014a}, which makes advanced power control algorithms computationally feasible. A few recent works have considered power allocation for Massive MIMO systems \cite{Guo2014a,Victor2015b}, but none of them has considered the BS-user association problem.

Massive MIMO has demonstrated high energy efficiency in homogeneously loaded scenarios \cite{Ngo2013a}, where an equal number of users are preassigned to each BS. At any given time, the user load is typically heterogeneously distributed in practice, such that some BSs have many more users in their vicinity than others. Large SE gains are possible by balancing the load over the network \cite{Ye2013a}, using other user association rules than simple max-SNR association. The optimal association is naturally a combinatorial problem with a complexity that scales exponentially with the network size \cite{Ye2013a}. While load balancing is a well-studied problem for heterogeneous multi-tier networks, the recent work \cite{Bjornson2013e} has shown that large gains are possible also in Massive MIMO systems with heterogeneous user loads.

In this paper, we jointly optimize the downlink (DL) power allocation and BS-user association for Massive MIMO cellular systems. A key assumption is that each user can be served by multiple BSs, using low-complexity non-coherent joint transmission. First, we derive a new general ergodic SE expression for the scenario when the signals are decoded in a successive manner. A closed-form expression is then derived for MRT precoding and Rayleigh fading channels. After that, we formulate a long-term power minimization problem under constraints on the ergodic SE per user and maximum transmit power per BS. This is shown to be a linear program when the new ergodic SE expression for MRT precoding is used, so the optimal solution can be obtained in polynomial time. The solution also provides the optimal BS-user association policy, which assigns a single BS per user in most cases. In addition to fixing the target SE constraints, we consider weighted max-min SE optimization and show that this problem can also be solved efficiently using our new SE expression.

\textit{Notations:}  We use the upper-case bold face letters for matrices and lower-case bold face ones for vectors. $\mathbf{I}_M$ and $\mathbf{I}_K$ are the identity matrices of size $M \times M$ and $K \times K$, respectively. The operator $\mathbb{E} \left\{ \cdot \right\}$ represents  the expectation of a random variable. The notation $ \| \cdot \| $ stands for the Euclidean norm. The regular and Hermitian transposes are denoted by $(\cdot)^T$ and  $(\cdot)^H$, respectively. Finally, $\mathcal{CN}(\cdot,\cdot)$ represents the circularly symmetric complex Gaussian distribution.
\begin{figure}[t]
    \centering
    \includegraphics[trim=3.5cm 7.0cm 3cm 10cm, clip=true, width=2.6in]{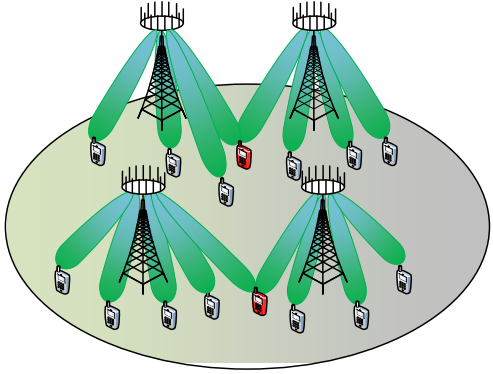}
    \caption{A multiple-cell Massive MIMO DL system where users can be associated with more than one BS (e.g., red users). The optimized BS subset for each user is obtained from the proposed optimization problem.} 
    \label{fig:MassiveMIMOSystems}
    \vspace{-0.4cm}
\end{figure}

\section{System Model and Achievable Performance} \label{Section:System-Model and Achievable Performance}

A schematic diagram of our system model is shown in Fig.~\ref{fig:MassiveMIMOSystems}. We consider a Massive MIMO system with $L$ cells. Each cell comprises a BS with $M$ antennas. The system serves $K$ single antenna users in the same time-frequency resource. Note that each user is conventionally associated and served by only one of the BSs. However, in this paper, we optimize the BS-user association, so the users are numbered from $1$ to $K$ without having predefined cell indices. Each user is preassociated with all BSs, while the optimal subset of BSs is later found by optimization. We assume that the channels are constant and frequency-flat in a coherence interval of length $\tau_c$ symbols and the system operates in time division duplex (TDD) mode. In each coherence interval, $\tau_p$ symbols are used for channel estimation, and $\tau_c - \tau_p$ symbols are dedicated for the data transmission. We focus on the DL data transmission in this paper.
\vspace*{-0.2cm}
\subsection{Uplink Channel Estimation}
\vspace*{-0.2cm}
Assume that all users simultaneously transmit mutually orthogonal pilot sequences of length $\tau_p$ with $\tau_p \geq K$. The received pilot signal $\mathbf{Y}_l \in \mathbb{C}^{M \times \tau_p}$ at BS $l$ is expressed as

\begin{equation} \label{eq: UL-Recieved-Pilot-Matrix}
\mathbf{Y}_l = \mathbf{H}_l \mathbf{P}^{1/2} \pmb{ \Phi }^H + \mathbf{N}_l,
\end{equation}
where the channel matrix  $\mathbf{H}_l = [\mathbf{h}_{l,1}, \ldots,  \mathbf{h}_{l,K} ] \in \mathbb{C}^{M \times K}$ has the column vectors $\mathbf{h}_{l,k}$  each of which denotes the channel between user $k$ and BS $l$, for $k=1,\ldots,K$ and $l=1,\ldots, L$.  In this paper, we consider uncorrelated Rayleigh fading channels $\mathbf{h}_{l,k} \sim \mathcal{CN}(\mathbf{0}, \beta_{l,k} \mathbf{I}_M)$ with the channel variance $\beta_{l,k}$. The orthogonality of the pilot sequences implies that the $\tau_p \times K$ pilot matrix $\pmb{\Phi} =  [\pmb{\phi}_1, \ldots , \pmb{\phi}_K]$ satisfies $\pmb{\Phi}^H \pmb{\Phi} = \tau_p \mathbf{I}_K$. If we let $p_k$ denote the power that user $k$ assigns for each uplink (UL) pilot symbol, then the diagonal power matrix is formulated as $\mathbf{P} = \mathrm{diag} (p_1, \ldots, p_K) \in \mathbb{C}^{K \times K}$. Finally,  $\mathbf{N}_l \in \mathbb{C}^{M \times \tau_p}$ is Gaussian noise with its independent entries having the distribution $\mathcal{CN} (0, \sigma_{\mathrm{UL}}^2 )$. 

Based on the received pilot signal in \eqref{eq: UL-Recieved-Pilot-Matrix} and assuming that the BS knows the channel statistics, it can apply minimum mean square error (MMSE) estimation \cite{Kay1993a} to obtain an estimate $\hat{ \mathbf{h} }_{l,k}$ of $\mathbf{h}_{l,k}$. Due to the orthogonality of the pilot sequences, the channel $\mathbf{h}_{l,k}$ only occurs as $\mathbf{h}_{l,k} \pmb{\phi}_k^H $ in \eqref{eq: UL-Recieved-Pilot-Matrix}. Thus, a sufficient statistic to estimate channel $\mathbf{h}_{l,k}$ is
\begin{equation}
\begin{split}
\mathbf{Y}_l \pmb{\phi}_k  = \mathbf{H}_l \mathbf{P}^{1/2} \pmb{ \Phi }^H \pmb{\phi}_k  + \mathbf{N}_l \pmb{\phi}_k = \sqrt{p_{k}} \tau_p \mathbf{h}_{l,k} + \tilde{ \mathbf{n} }_{l,k},
\end{split}
\end{equation}
where $\tilde{ \mathbf{n} }_{l,k} = \mathbf{N}_l \pmb{\phi}_k \sim \mathcal{CN}( \mathbf{0}, \tau_p \sigma_{\mathrm{UL}}^2 \mathbf{I}_M )$. The MMSE estimate $\hat{ \mathbf{h} }_{l,k}$  of $\mathbf{h}_{l,k}$ has the estimation error defined as $\mathbf{e}_{l,k} = \hat{\mathbf{h}}_{l,k} - \mathbf{h}_{l,k} $. Consequently, the channel estimate and the estimation error are independent and distributed as
\begin{align}
\label{eq: Estimate-Channel-Distribution}
\hat{ \mathbf{h} }_{l,k} & \sim \mathcal{CN} \left(  \mathbf{0}, \frac{ p_k  \tau_p \beta_{l,k}^2 }{ p_k  \tau_p \beta_{l,k} + \sigma_{ \mathrm{UL} }^2  }  \mathbf{I}_M \right), \\
\label{eq: Estimation-Error-Distribution}
 \mathbf{e}_{l,k} & \sim \mathcal{CN} \left(  \mathbf{0}, \frac{ \beta_{l,k} \sigma_{ \mathrm{UL} }^2 }{ p_k \tau_p \beta_{l,k} + \sigma_{ \mathrm{UL} }^2 }  \mathbf{I}_M  \right).
\end{align}

These distributions provide the statistical properties of the channel estimates that are needed to analyze utility functions like the DL ergodic SE in Massive MIMO cellular systems and further formulate joint user association and QoS optimization problems, which are the main goals of this paper.
\vspace*{-0.1cm}
\subsection{Data Transmission Model}
\vspace*{-0.2cm}
We assume that each BS is allowed to transmit to each user but sends a different data symbol than the other BSs. This is referred to as non-coherent joint transmission \cite{Li2015} and it is less complicated to implement than coherent joint transmission which requires phase-synchronization between BSs. At BS $l$, the transmit signal $\mathbf{x}_l$ is 
\begin{equation}
\mathbf{x}_l = \sum_{ t =1 }^{K} \sqrt{ \rho_{l,t} } \mathbf{w}_{l,t} s_{l,t}.
\end{equation}
Here the scalar data symbol $s_{l,t}$, which BS $l$ intends to transmit to user $t$, has unit power $\mathbb{E} \{ | s_{l,t }|^2 \} = 1$ and $\rho_{l,t}$ stands for the transmit power allocated to this particular user. In addition, the corresponding linear precoding vector $ \mathbf{w}_{l,t} \in \mathbb{C}^{M}$ determines the spatial directivity of the signal sent to this user.  We notice that user $t$ is associated with BS $l$ if and only if $\rho_{l,t} \neq 0$, and each user can be associated with multiple BSs. The received signal at an arbitrary user $k$ is modeled as
\begin{equation} \label{eq: Downlink-Signal}
 y_k = \sum\limits_{i = 1}^{L} \sqrt{ \rho_{i,k} } \mathbf{h}_{i,k}^H \mathbf{w}_{i,k} s_{i,k} + \sum\limits_{i =1 }^{L} \sum\limits_{ \substack{t =1 \\ t \neq k} }^{K} \sqrt{ \rho_{i,t} } \mathbf{h}_{i,k}^H \mathbf{w}_{i,t} s_{i,t} + n_k. 
\end{equation}
The first part in \eqref{eq: Downlink-Signal} is the superposition of desired signals that user $k$ receives from the BSs. The second part is multi-user interference that degrades the quality of the detected signals. The third part is the additive white noise $n_k \sim \mathcal{CN} (0, \sigma_{ \mathrm{DL} }^2 )$. 

To avoid spending precious DL resources on pilot signaling, we suppose that user $k$ does not have any information about the current channel realizations but only knows the channel statistics \cite{Bjornson2016b}. User $k$ would like to detect all the desired signals coming from the BSs. To achieve low computational complexity, we assume that each user detects its different data signals sequentially and applies successive interference cancellation. From these assumptions, a lower bound on the capacity between the BSs and user $k$ is given in Theorem \ref{Theorem-Lower-Bound-Rate}. 

\begin{theorem} \label{Theorem-Lower-Bound-Rate}
By decoding Gaussian signals in a successive manner, a lower bound on the DL ergodic sum capacity of an arbitrary user $k$ is given by
\begin{equation} \label{eq: Sum-Rate-k}
R_k =  \left( 1 - \frac{\tau_p}{\tau_c} \right) \log_2 \left(1 + \mathrm{SINR}_k \right) \quad \textrm{[bit/symbol]},
\end{equation}
where the value of the SINR is defined in \eqref{eq: SINR_k}.
\begin{figure*}[t]
\begin{equation} \label{eq: SINR_k}
\mathrm{SINR}_k  = \frac{ \sum_{ i=1}^{L} \rho_{i,k} | \mathbb{E} \{ \mathbf{h}_{i,k}^H \mathbf{w}_{i,k} \} |^2 }{ \sum_{ i =1}^{L}  \rho_{i,k} (\mathbb{E} \{ | \mathbf{h}_{i,k}^H \mathbf{w}_{i,k} |^2 \} - | \mathbb{E} \{ \mathbf{h}_{i,k}^H \mathbf{w}_{i,k} \} |^2 ) + \sum_{ i=1 }^{L} \sum_{ \substack{t =1 \\ t \neq k} }^{K} \rho_{i,t} \mathbb{E} \{ | \mathbf{h}_{i,k}^H \mathbf{w}_{i,t} |^2 \} + \sigma_{ \mathrm{DL} }^2 }
\end{equation}
\vspace*{-0.1cm}
\hrulefill
\vspace*{-0.4cm}
\end{figure*}
\end{theorem}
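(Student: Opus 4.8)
The plan is to combine the chain rule of mutual information, which formalises successive interference cancellation, with the standard ``use-and-then-forget'' (worst-case uncorrelated additive noise) bound that underlies channel-hardening arguments in Massive MIMO. Throughout, the symbols $s_{i,k}$ are i.i.d.\ $\mathcal{CN}(0,1)$ and independent of all channels, and user $k$ knows only the channel statistics, hence only the deterministic quantities $\mathbb{E}\{\mathbf{h}_{i,k}^H\mathbf{w}_{i,k}\}$; the factor $(1-\tau_p/\tau_c)$ is simply the pre-log penalty accounting for the $\tau_p$ pilot symbols per coherence block.

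First I would fix an arbitrary decoding order, say $1,2,\dots,L$, and split each desired contribution into a known deterministic gain plus a zero-mean fluctuation, $\sqrt{\rho_{i,k}}\,\mathbf{h}_{i,k}^H\mathbf{w}_{i,k}s_{i,k}=\sqrt{\rho_{i,k}}\,\mathbb{E}\{\mathbf{h}_{i,k}^H\mathbf{w}_{i,k}\}s_{i,k}+\sqrt{\rho_{i,k}}(\mathbf{h}_{i,k}^H\mathbf{w}_{i,k}-\mathbb{E}\{\mathbf{h}_{i,k}^H\mathbf{w}_{i,k}\})s_{i,k}$. When decoding $s_{i,k}$, the user subtracts only the deterministic parts of the already-decoded symbols $s_{1,k},\dots,s_{i-1,k}$, so the effective observation is $\sqrt{\rho_{i,k}}\,\mathbb{E}\{\mathbf{h}_{i,k}^H\mathbf{w}_{i,k}\}s_{i,k}+z_i$, where $z_i$ collects the own fluctuation, the residual fluctuations of the streams $j<i$, the full contributions of the not-yet-decoded streams $j>i$, the multi-user interference, and the noise $n_k$. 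The two facts I need are that $\mathbb{E}\{z_i s_{i,k}^*\}=0$ (every summand in $z_i$ is a product of a zero-mean factor with a factor independent of $s_{i,k}$) and that $\mathbb{E}\{|z_i|^2\}=\sum_{j=1}^{L}\rho_{j,k}(\mathbb{E}\{|\mathbf{h}_{j,k}^H\mathbf{w}_{j,k}|^2\}-|\mathbb{E}\{\mathbf{h}_{j,k}^H\mathbf{w}_{j,k}\}|^2)+\sum_{j>i}\rho_{j,k}|\mathbb{E}\{\mathbf{h}_{j,k}^H\mathbf{w}_{j,k}\}|^2+\sum_{i'=1}^{L}\sum_{t\neq k}\rho_{i',t}\mathbb{E}\{|\mathbf{h}_{i',k}^H\mathbf{w}_{i',t}|^2\}+\sigma_{\mathrm{DL}}^2$, which follows because the fluctuations are mutually uncorrelated across streams and uncorrelated with the interference and the noise.

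Next I would invoke the worst-case-noise lemma: for a scalar channel $y=gs+z$ with $s\sim\mathcal{CN}(0,1)$, deterministic $g$, and $z$ of fixed variance uncorrelated with $s$, the mutual information is at least $\log_2(1+|g|^2/\mathbb{E}\{|z|^2\})$, attained when $z$ is Gaussian; note that since the effective gain is deterministic, this bound is already the ergodic rate. Applying it at stage $i$, conditionally on the decoded symbols $s_{1,k},\dots,s_{i-1,k}$ (which leaves the relevant uncorrelatedness intact), yields $I(s_{i,k};y_k\mid s_{1,k},\dots,s_{i-1,k})\ge\log_2(1+\mathrm{SINR}_{k,i})$ with $\mathrm{SINR}_{k,i}=\rho_{i,k}|\mathbb{E}\{\mathbf{h}_{i,k}^H\mathbf{w}_{i,k}\}|^2/\mathbb{E}\{|z_i|^2\}$. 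By the chain rule $I(s_{1,k},\dots,s_{L,k};y_k)=\sum_{i=1}^{L}I(s_{i,k};y_k\mid s_{1,k},\dots,s_{i-1,k})$, so summing the $L$ stages and inserting the pilot-overhead factor gives $R_k\ge(1-\tau_p/\tau_c)\log_2\prod_{i=1}^{L}(1+\mathrm{SINR}_{k,i})$. Writing $1+\mathrm{SINR}_{k,i}$ as the ratio (denominator $\mathbb{E}\{|z_i|^2\}$, numerator $\mathbb{E}\{|z_i|^2\}+\rho_{i,k}|\mathbb{E}\{\mathbf{h}_{i,k}^H\mathbf{w}_{i,k}\}|^2$) and using the explicit form of $\mathbb{E}\{|z_i|^2\}$, the denominator of stage $i$ equals the numerator of stage $i+1$, so the product telescopes and collapses to $1+\mathrm{SINR}_k$ with $\mathrm{SINR}_k$ exactly as in \eqref{eq: SINR_k}; the same computation shows the bound does not depend on the decoding order.

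The main obstacle is the mutual-information bookkeeping rather than any heavy calculation: one must (i) justify that the worst-case uncorrelated Gaussian-noise substitution is legitimate even though $z_i$ depends on the unknown channels and, after conditioning, on the decoded symbols; (ii) verify that $z_i$ is truly uncorrelated with $s_{i,k}$ despite containing the multiplicative fluctuation $(\mathbf{h}_{i,k}^H\mathbf{w}_{i,k}-\mathbb{E}\{\cdot\})s_{i,k}$; and (iii) confirm that the telescoping identity reproduces precisely the numerator and denominator of \eqref{eq: SINR_k}. The remaining steps — expanding $\mathbb{E}\{|\mathbf{h}_{i,k}^H\mathbf{w}_{i,k}|^2\}=|\mathbb{E}\{\mathbf{h}_{i,k}^H\mathbf{w}_{i,k}\}|^2+\mathrm{Var}(\mathbf{h}_{i,k}^H\mathbf{w}_{i,k})$ in the per-stage noise variance and carrying out the telescoping product — are routine.
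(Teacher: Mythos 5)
Your proposal is correct and follows essentially the same route as the paper: successive decoding with the already-detected symbols (but not their channel realizations) treated as side information, a per-stage hardening/worst-case-uncorrelated-noise bound with deterministic effective gain $\mathbb{E}\{\mathbf{h}_{i,k}^H\mathbf{w}_{i,k}\}$ giving per-BS rates $R_{i,k}$, and summation over $i$ with the telescoping product collapsing to the composite $\mathrm{SINR}_k$ in \eqref{eq: SINR_k}. This matches the paper's proof sketch (with details deferred to its companion reference), so no further comparison is needed.
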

\begin{proof}
The sum SE $R_k$ is obtained, similar to \cite{Tse2005a, Li2015}, by letting user $k$ detect the received signals coming from all BSs. Suppose that user $k$ is currently detecting the signal sent by an arbitrary BS $i$, say $s_{i,k}$, and possesses the detected signals of the $(i-1)$ previous BSs  but not their instantaneous channel realizations. Let us  denote $R_{i,k}$ the lower bound on the ergodic SE between user $k$ and BS $i$, and therefore $R_k = \sum_{i=1}^{L} R_{i,k}$. The detailed proof is available in \cite{Chien2015}.
\end{proof}

The numerator in \eqref{eq: SINR_k} is a summation of the desired signal powers sent to user $k$ over the average precoded channels from each BS. It confirms that BS cooperation in the form of non-coherent joint transmission is feasible and has the potential to increase the sum SE at the users since all signals are useful. The first term in the denominator represents beamforming gain uncertainty, caused by the lack of CSI at the terminal, while the second term is multi-user interference and the third term represents Gaussian noise. Besides, we stress that the SE expression in Theorem \ref{Theorem-Lower-Bound-Rate} holds for any channel distribution and precoding schemes. 
\vspace*{-0.2cm}
\subsection{Achievable Spectral Efficiency under Rayleigh Fading} \label{Achievable-Spectral-Efficiency}
\vspace*{-0.1cm}
 We consider MRT precoding which is defined as
\begin{equation}  \label{eq: Linear-Precoding-Vector}
\mathbf{w}_{l,k}  =  \frac{ \hat{\mathbf{h}}_{l,k} }{ \sqrt{ \mathbb{E} \{  \| \hat{ \mathbf{h}}_{l,k}  \|^{2} \}}}.
\end{equation}
The lower bound on the ergodic capacity is obtained in closed form with MRT, as shown in Corollary \ref{Corollary-MRT-Rate}.
\begin{corollary} \label{Corollary-MRT-Rate}
For Rayleigh fading channels, if the BSs utilize MRT precoding, then the lower bound on the DL ergodic sum rate in Theorem \ref{Theorem-Lower-Bound-Rate} is simplified to
\begin{equation} \label{eq: Corollary-MRT-Rate}
R_k = \left( 1 - \frac{\tau_p}{\tau_c} \right) \log_2\left( 1 + \mathrm{SINR}_k \right) \; \textrm{[bit/symbol]},
\end{equation}
where the SINR is
\begin{equation} \label{eq: SINR-MRT}
\mathrm{SINR}_k = \frac{ M \sum_{i =1}^{L} \rho_{i,k} \frac{p_k \tau_p \beta_{i,k}^2 }{ p_k \tau_p \beta_{i,k} + \sigma_{\mathrm{UL}}^2 } }{ \sum_{i = 1 }^{L} \sum_{ t=1 }^{K} \rho_{i,t} \beta_{i,k} + \sigma_{ \mathrm{DL}}^2 }.
\end{equation}
\begin{proof}
We exploit closed-form expressions for the moments of circularly symmetric Gaussian variables in order to compute the expectations in \eqref{eq: SINR_k}. The full proof is available in \cite{Chien2015}.
\end{proof}
\end{corollary}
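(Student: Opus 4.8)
The plan is to start from the general SINR expression \eqref{eq: SINR_k} of Theorem~\ref{Theorem-Lower-Bound-Rate} and evaluate each of the four expectations it contains under MRT precoding \eqref{eq: Linear-Precoding-Vector} and Rayleigh fading. First I would write $\mathbf{w}_{i,k} = \hat{\mathbf{h}}_{i,k}/\sqrt{M \gamma_{i,k}}$, where $\gamma_{i,k} = p_k \tau_p \beta_{i,k}^2/(p_k \tau_p \beta_{i,k} + \sigma_{\mathrm{UL}}^2)$ is the per-entry variance of the estimate from \eqref{eq: Estimate-Channel-Distribution}, so that $\mathbb{E}\{\|\hat{\mathbf{h}}_{i,k}\|^2\} = M\gamma_{i,k}$ and $\mathbb{E}\{\|\mathbf{w}_{i,k}\|^2\} = 1$. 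The facts used throughout are that, for MMSE estimation, $\hat{\mathbf{h}}_{i,k}$ and $\mathbf{e}_{i,k} = \hat{\mathbf{h}}_{i,k} - \mathbf{h}_{i,k}$ are independent and zero mean (cf.\ \eqref{eq: Estimate-Channel-Distribution}--\eqref{eq: Estimation-Error-Distribution}), that pilot orthogonality makes $\{\mathbf{h}_{i,k}, \hat{\mathbf{h}}_{i,k}\}$ independent of $\{\mathbf{h}_{i,t}, \hat{\mathbf{h}}_{i,t}\}$ for $t \neq k$, and the standard moments of a circularly symmetric Gaussian vector $\mathbf{x} \sim \mathcal{CN}(\mathbf{0}, \gamma \mathbf{I}_M)$, namely $\mathbb{E}\{\|\mathbf{x}\|^2\} = M\gamma$ and $\mathbb{E}\{\|\mathbf{x}\|^4\} = M(M+1)\gamma^2$.

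For the numerator, I would substitute $\mathbf{h}_{i,k} = \hat{\mathbf{h}}_{i,k} - \mathbf{e}_{i,k}$ into $\mathbf{h}_{i,k}^H \mathbf{w}_{i,k}$; the $\mathbf{e}_{i,k}^H\hat{\mathbf{h}}_{i,k}$ cross term vanishes in expectation by independence and zero mean, leaving $\mathbb{E}\{\mathbf{h}_{i,k}^H \mathbf{w}_{i,k}\} = \mathbb{E}\{\|\hat{\mathbf{h}}_{i,k}\|^2\}/\sqrt{M\gamma_{i,k}} = \sqrt{M\gamma_{i,k}}$, hence $|\mathbb{E}\{\mathbf{h}_{i,k}^H \mathbf{w}_{i,k}\}|^2 = M\gamma_{i,k}$, which is exactly the $M$-scaled term in the numerator of \eqref{eq: SINR-MRT}. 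For the signal-power term $\mathbb{E}\{|\mathbf{h}_{i,k}^H \mathbf{w}_{i,k}|^2\}$, I expand the squared magnitude of $(\hat{\mathbf{h}}_{i,k} - \mathbf{e}_{i,k})^H \hat{\mathbf{h}}_{i,k}/\sqrt{M\gamma_{i,k}}$: the $\|\hat{\mathbf{h}}_{i,k}\|^4$ contribution equals $M(M+1)\gamma_{i,k}^2$ via the fourth-moment formula, the cross terms vanish, and $\mathbb{E}\{|\mathbf{e}_{i,k}^H \hat{\mathbf{h}}_{i,k}|^2\} = \mathbb{E}\{\hat{\mathbf{h}}_{i,k}^H \mathbb{E}\{\mathbf{e}_{i,k}\mathbf{e}_{i,k}^H\}\hat{\mathbf{h}}_{i,k}\}$ equals the estimation-error variance of \eqref{eq: Estimation-Error-Distribution} times $M\gamma_{i,k}$. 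Dividing by $M\gamma_{i,k}$ gives $\mathbb{E}\{|\mathbf{h}_{i,k}^H \mathbf{w}_{i,k}|^2\} = (M+1)\gamma_{i,k} + \beta_{i,k}\sigma_{\mathrm{UL}}^2/(p_k\tau_p\beta_{i,k}+\sigma_{\mathrm{UL}}^2)$, so the beamforming-gain-uncertainty term $\mathbb{E}\{|\mathbf{h}_{i,k}^H \mathbf{w}_{i,k}|^2\} - |\mathbb{E}\{\mathbf{h}_{i,k}^H \mathbf{w}_{i,k}\}|^2$ collapses to $\gamma_{i,k} + \beta_{i,k}\sigma_{\mathrm{UL}}^2/(p_k\tau_p\beta_{i,k}+\sigma_{\mathrm{UL}}^2)$, which simplifies to precisely $\beta_{i,k}$ once the two fractions are placed over the common denominator $p_k\tau_p\beta_{i,k}+\sigma_{\mathrm{UL}}^2$.

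For the interference terms with $t \neq k$, the precoder $\mathbf{w}_{i,t}$ is a deterministic function of $\hat{\mathbf{h}}_{i,t}$ alone, hence independent of $\mathbf{h}_{i,k}$; conditioning on $\mathbf{w}_{i,t}$ and using $\mathbb{E}\{\mathbf{h}_{i,k}\mathbf{h}_{i,k}^H\} = \beta_{i,k}\mathbf{I}_M$ gives $\mathbb{E}\{|\mathbf{h}_{i,k}^H \mathbf{w}_{i,t}|^2\} = \beta_{i,k}\mathbb{E}\{\|\mathbf{w}_{i,t}\|^2\} = \beta_{i,k}$. Substituting the three results into \eqref{eq: SINR_k}, the numerator becomes $M\sum_{i}\rho_{i,k}\gamma_{i,k}$ and the two interference-type sums merge into $\sum_{i}\sum_{t=1}^{K}\rho_{i,t}\beta_{i,k}$, which is \eqref{eq: SINR-MRT}. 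The only mildly delicate steps I anticipate are the fourth-moment evaluation of $\mathbb{E}\{\|\hat{\mathbf{h}}_{i,k}\|^4\}$ and the bookkeeping of the $\hat{\mathbf{h}}$--$\mathbf{e}$ decomposition so that every cross term is correctly recognized as zero; the closing algebraic identity $\gamma_{i,k} + \beta_{i,k}\sigma_{\mathrm{UL}}^2/(p_k\tau_p\beta_{i,k}+\sigma_{\mathrm{UL}}^2) = \beta_{i,k}$ is what makes the denominator clean, and it is nothing more than the statement that estimate variance plus error variance equals the channel variance $\beta_{i,k}$.
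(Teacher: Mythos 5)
Your proposal is correct and follows exactly the route the paper indicates: it evaluates each expectation in \eqref{eq: SINR_k} under MRT via the first, second, and fourth moments of circularly symmetric Gaussian vectors, uses the independence of the MMSE estimate and its error so that the uncertainty term collapses to $\gamma_{i,k}+\sigma_{e,i,k}^2=\beta_{i,k}$, and uses pilot orthogonality for the $t\neq k$ interference terms. All intermediate values ($|\mathbb{E}\{\mathbf{h}_{i,k}^H\mathbf{w}_{i,k}\}|^2=M\gamma_{i,k}$, $\mathbb{E}\{|\mathbf{h}_{i,k}^H\mathbf{w}_{i,t}|^2\}=\beta_{i,k}$) check out, so the argument matches the paper's proof sketch in substance and detail.
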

This corollary reveals that the signal power increases proportionally to $M$ thanks to the array gain. Meanwhile, the interference is unaffected by the number of BS antennas. In addition, the non-coherent combination of received signals at user $k$ adds up the power from multiple BSs and can give stronger signal gain than if only one BS serves the user.

\section{Joint Total Transmit Power Optimization and User Association Optimization} \label{Section:Power-Optimization}
\vspace*{-0.1cm}
\subsection{Problem Formulation}\label{Sec: Power-Consumption}
\vspace*{-0.2cm}
The transmit power at an arbitrary BS $i$, $P_{ \mathrm{trans},i}$ is limited by the peak radio frequency output power $P_{\mathrm{max},i}$, which defines the maximum power that can be utilized at each BS. In particular, $P_{ \mathrm{trans},  i}$ is computed from the transmit signals as
\begin{equation} \label{eq: Transmit-Power}
 P_{\mathrm{trans},i} =  \mathbb{E} \{ \| \mathbf{x}_i \|^2 \} =  \sum_{t=1 }^{K} \rho_{i,t},  \; 0 \leq P_{\textrm{trans},i} \leq P_{\textrm{max},i}.
\end{equation}
The main goal of a Massive MIMO network is to deliver a promised service quality to the users, while consuming as little power as possible. In this paper, we formulate this as a power minimization problem under user-specific QoS constraints as
\begin{equation} \label{Optimization: General-Form}
\begin{aligned}
& \underset{ \{\rho_{i,t} \geq 0 \} }{\textrm{minimize}}
& & \sum_{ i=1 }^{L} P_{\textrm{trans},i} \\
& \textrm{subject to}
& &  R_k \geq \xi_k,\; \forall k \\
& &&  P_{\mathrm{trans},i} \leq P_{\mathrm{max}, i }, \; \forall i,\\
\end{aligned}
\end{equation}
where $\xi_k$ represents for the target QoS of user $k$. By defining $\hat{\xi}_k = 2^{ \frac{\xi_k \tau_c }{(\tau_c - \tau_p)} } -1$ and plugging \eqref{eq: Corollary-MRT-Rate} and \eqref{eq: Transmit-Power} into \eqref{Optimization: General-Form}, the power minimization problem for MRT is expressed as
\begin{equation} \label{eq:PowerMRT}
\begin{aligned}
& \underset{ \{ \rho_{i,t} \geq 0 \}}{\mathrm{minimize}}
& & \sum_{i=1}^{L} \sum_{t=1}^{K} \rho_{i,t} \\
& \mbox{subject to}
& & \frac{ \sum_{i=1}^{L} \rho_{i,k} \frac{ M p_k \tau_p  \beta_{i,k}^2 }{ p_k \tau_p \beta_{i,k} + \sigma_{ \mathrm{UL} }^2 } }{ \sum_{i=1}^{L} \sum_{t=1}^{K}  \rho_{i,t} \beta_{i,k} +\sigma_{ \mathrm{DL} }^2 }  \geq \hat{ \xi }_k, \; \forall k\\
& && \sum_{t=1}^{K} \rho_{i,t} \leq P_{\mathrm{max},i}, \; \forall i. \\
\end{aligned}
\end{equation}

The jointly optimal power allocation and user association are obtained by solving this problem. At the optimal solution, each user $t$ in the network is associated with the subset of BSs that is determined by the non-zero values $\rho_{i,t}$. There are fundamental differences between our problem formulation and previous ones such as \cite{Li2015} and the references therein. The main distinction is that these previous works consider short-term QoS constraints that depend on the current fading realizations, while we consider long-term QoS constraints that do not depend on instantaneous fading realizations thanks to channel hardening and favorable properties in Massive MIMO.

\vspace*{-0.3cm}
\subsection{Optimal Solution with Linear Programming}
\vspace*{-0.2cm}
Let us denote the power control vector of an arbitrary user $t$ by $\pmb{\rho}_t = [\rho_{1, t}, \ldots, \rho_{L,t}]^T \in \mathbb{C}^{L}$, where its entries satisfy $\rho_{i,t} \geq 0$ meaning that $ \pmb{\rho}_t  \succeq 0$. We also denote $\mathbf{1} = [1,\ldots,1]^T \in \mathbb{C}^{L}$ and $\pmb{\epsilon}_i \in \mathbb{C}^L$ has all zero entries but the $i$th one is 1. The optimal power allocation is obtained as shown in Theorem \ref{Theorem: Linear-Solution}.
\begin{theorem} \label{Theorem: Linear-Solution}
The optimal solution to the total transmit power minimization problem in \eqref{eq:PowerMRT} is obtained by solving the following linear program:
\begin{equation} \label{eq: Linear-Solution-CVX}
\begin{aligned}
& \underset{ \{ \pmb{\rho}_t \succeq 0 \}  }{\mathrm{minimize}}
& & \sum_{t=1}^{K} \mathbf{1}^T \pmb{\rho}_t \\
& \textrm{subject to}
& & \sum_{t=1}^{K} \mathbf{c}_k^T \pmb{\rho}_t -  \mathbf{b}_k^T \pmb{\rho}_k + \sigma_{ \mathrm{DL} }^2 \leq 0, \; \forall k \\
& && \sum_{t=1}^{K} \pmb{\epsilon}_i^T \pmb{\rho}_{t} \leq P_{\mathrm{max},i}, \; \forall i.\\
\end{aligned}
\end{equation}
Here, the vectors $\mathbf{c}_{k} $ and $\mathbf{b}_{k} $ are defined as
\begin{equation*}
\begin{split}
\mathbf{c}_k &= \left[ \beta_{1,k}, \ldots, \beta_{L,k}  \right]^T,\\
\mathbf{b}_k &= \left[ \frac{M p_k \tau_p \beta_{1,k}^2 }{ \hat{\xi}_k \left( p_k  \tau_p \beta_{1,k} + \sigma_{ \mathrm{UL} }^2 \right) }, \ldots, \frac{M p_k \tau_p \beta_{L,k}^2 }{\hat{\xi}_k  \left( p_k \tau_p \beta_{L,k} + \sigma_{ \mathrm{UL} }^2  \right) } \right]^T.
\end{split}
\end{equation*}
\end{theorem}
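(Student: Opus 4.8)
The plan is to show that \eqref{eq:PowerMRT} and \eqref{eq: Linear-Solution-CVX} are the \emph{same} optimization problem written in two notations: same variables, same feasible set, same objective. Once that equivalence is established, the claim follows because \eqref{eq: Linear-Solution-CVX} visibly minimizes a linear function over a polyhedron, hence is a linear program and can be solved to global optimality in polynomial time.

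First I would rewrite the QoS constraint. For every candidate $\{\rho_{i,t}\geq 0\}$ the denominator of $\mathrm{SINR}_k$ in \eqref{eq: SINR-MRT}, namely $\sum_{i}\sum_{t}\rho_{i,t}\beta_{i,k}+\sigma_{\mathrm{DL}}^2$, is strictly positive because $\sigma_{\mathrm{DL}}^2>0$. Therefore $\mathrm{SINR}_k\geq\hat{\xi}_k$ is equivalent, after cross-multiplying by this positive number and then dividing by $\hat{\xi}_k>0$ (assuming the SE targets satisfy $\xi_k>0$, so that $\hat{\xi}_k>0$), to
\begin{equation*}
\sum_{i=1}^{L}\rho_{i,k}\,\frac{M p_k \tau_p \beta_{i,k}^2}{\hat{\xi}_k\big(p_k \tau_p \beta_{i,k}+\sigma_{\mathrm{UL}}^2\big)} \;-\; \sum_{i=1}^{L}\sum_{t=1}^{K}\rho_{i,t}\beta_{i,k} \;-\; \sigma_{\mathrm{DL}}^2 \;\geq\; 0 .
\end{equation*}
Recognizing that the $i$th entry of $\mathbf{b}_k$ is exactly the coefficient multiplying $\rho_{i,k}$ in the first sum, that the $i$th entry of $\mathbf{c}_k$ is $\beta_{i,k}$, and that $\sum_{i}\sum_{t}\rho_{i,t}\beta_{i,k}=\sum_{t=1}^{K}\mathbf{c}_k^T\pmb{\rho}_t$ while $\sum_{i}\rho_{i,k}(\cdot)=\mathbf{b}_k^T\pmb{\rho}_k$, this is precisely the $k$th constraint of \eqref{eq: Linear-Solution-CVX}, i.e. $\sum_{t}\mathbf{c}_k^T\pmb{\rho}_t-\mathbf{b}_k^T\pmb{\rho}_k+\sigma_{\mathrm{DL}}^2\leq 0$.

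Next I would translate the remaining pieces into the vector notation $\pmb{\rho}_t=[\rho_{1,t},\dots,\rho_{L,t}]^T$: the objective $\sum_{i}\sum_{t}\rho_{i,t}$ equals $\sum_{t=1}^{K}\mathbf{1}^T\pmb{\rho}_t$; the per-BS transmit power from \eqref{eq: Transmit-Power} equals $\sum_{t=1}^{K}\pmb{\epsilon}_i^T\pmb{\rho}_t$, so $P_{\mathrm{trans},i}\leq P_{\mathrm{max},i}$ becomes $\sum_{t}\pmb{\epsilon}_i^T\pmb{\rho}_t\leq P_{\mathrm{max},i}$; and the nonnegativity $\rho_{i,t}\geq 0$ is $\pmb{\rho}_t\succeq 0$. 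Collecting these identities shows the two problems coincide term by term, so their feasible sets, optimal values, and optimal solutions are the same. Since \eqref{eq: Linear-Solution-CVX} has a linear objective and only affine inequality constraints, it is a linear program; whenever \eqref{eq:PowerMRT} is feasible an optimal (vertex) solution exists and is computable in polynomial time by standard LP methods.

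I expect no deep obstacle: the argument is essentially bookkeeping. The one place that genuinely requires care is the first step — asserting that clearing the denominator of the fractional SINR constraint yields an \emph{equivalent} linear inequality rather than merely a necessary or sufficient one. I would make explicit that strict positivity of the denominator (guaranteed by the receiver noise term $\sigma_{\mathrm{DL}}^2>0$) together with $\hat{\xi}_k>0$ is exactly what makes the transformation reversible, so the feasible set is neither enlarged nor shrunk.
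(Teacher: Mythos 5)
Your proposal is correct and follows essentially the same route as the paper: it spells out the ``some algebra'' step (cross-multiplying the SINR constraint by the strictly positive denominator, dividing by $\hat{\xi}_k>0$, and rewriting objective and power constraints in the vector notation) and then concludes, as the paper does, that a linear objective with affine constraints makes \eqref{eq: Linear-Solution-CVX} a linear program equivalent to \eqref{eq:PowerMRT}. Your explicit remark that positivity of the denominator and of $\hat{\xi}_k$ makes the reformulation reversible is a worthwhile detail the paper leaves implicit, but it does not change the argument.
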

\begin{proof}
The problem in \eqref{eq: Linear-Solution-CVX} is obtained from \eqref{eq:PowerMRT} after some algebra. We note that the objective function is a linear combination of power vectors $\pmb{\rho}_t$, for $t=1, \ldots, K$. Moreover the constraint functions are affine functions of power variables. Thus, the optimization problem \eqref{eq: Linear-Solution-CVX} is a linear program.
\end{proof}

The merits of Theorem \ref{Theorem: Linear-Solution} are twofold: The total transmit power minimization problem for a Massive MIMO cellular system with non-coherent joint transmission can be solved to global optimality in polynomial time. The optimal power allocation can thus be obtained by interior-point methods, for example, general-purpose implementations such as CVX \cite{cvx2015}. Additionally, the solution provides the optimal BS-user association in the system as presented in the next subsection.
\vspace*{-0.2cm}
\subsection{BS-User Association Principle}
\vspace*{-0.2cm}
To shed light on the optimal association between users and BSs provided by the solution in Theorem \ref{Theorem: Linear-Solution}, we use Lagrange duality theory. The Lagrangian of \eqref{eq: Linear-Solution-CVX} is
\begin{equation} \label{eq: Langrangian}
\begin{split}
& \mathcal{L} \left( \pmb{ \rho}_t, \lambda_k, \mu_i \right) = \\
& \sum_{ t= 1}^{K} \mathbf{1}^T \pmb{\rho}_t+ \sum_{k=1}^{K} \lambda_k\left( \sum_{t=1}^{K} \mathbf{c}_k^T \pmb{\rho}_t - \mathbf{b}_k^T \pmb{\rho}_k + \sigma_{\mathrm{DL}}^2\right)\\
& +
 \sum_{i=1}^{L}\mu_i \left( \sum_{t=1}^{K} \pmb{\epsilon}_i^T \pmb{\rho}_t - P_{\mathrm{max},i}\right),
\end{split}
\end{equation}
where the non-negative Lagrange multipliers $\lambda_k$ and $\mu_i$ are associated with the $k$th QoS constraint and the peak transmit power constraint at BS $i$, respectively. Based on the Lagrangian we can formulate the dual problem and then obtain the set of BSs that serves an 
arbitrary user $t$ as follows.
\begin{theorem} \label{Theorem-BS-Association}
Let $\{ \check{\lambda}_k, \check{\mu}_i \}$ denote the optimal Lagrange multipliers. User $t$ is served only by the subset of BSs with indices in the set
\begin{equation} \label{eq: BS-Association}
\mathcal{S}_t = \underset{i}{ \mathrm{argmin}}  \left(1 + \sum\limits_{k=1}^{K} \check{\lambda}_{k} c_{i,k} + \sum\limits_{i=1}^{L} \check{\mu}_{i} \right)\frac{ 1 }{b_{i,t}},
\end{equation}
where $c_{i,k} = \beta_{i,k} $ and $b_{i,t} = \frac{M p_t \tau_p \beta_{i,t}^2 }{ \hat{\xi}_t ( p_t  \tau_p \beta_{i,t} + \sigma_{ \mathrm{UL} }^2 ) }$. 

The optimal BS association for user $t$ is further specified as one of the following two cases:
\begin{itemize}
\item It is served by one BS if the set $\mathcal{S}_t$ in \eqref{eq: BS-Association} only contains one index.
\item It is served by a subset of BSs if the set $\mathcal{S}_t$ in \eqref{eq: BS-Association} contains several indices.
\end{itemize}
\end{theorem}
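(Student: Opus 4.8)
The plan is to read the association rule \eqref{eq: BS-Association} off the KKT conditions of the linear program \eqref{eq: Linear-Solution-CVX}. Since that program is feasible and linear, strong duality holds and the KKT conditions are necessary and sufficient for a primal--dual optimal pair, so it is enough to describe the support of the optimal $\{\check\rho_{i,t}\}$. First I would expand the Lagrangian \eqref{eq: Langrangian} and collect, for each pair $(i,t)$, the coefficient of the scalar variable $\rho_{i,t}$ (the $i$th entry of $\pmb{\rho}_t$). Using $\mathbf{1}^T\pmb{\rho}_t=\sum_i\rho_{i,t}$, $\mathbf{c}_k^T\pmb{\rho}_t=\sum_i c_{i,k}\rho_{i,t}$, $\mathbf{b}_k^T\pmb{\rho}_k=\sum_i b_{i,k}\rho_{i,k}$ and $\pmb{\epsilon}_i^T\pmb{\rho}_t=\rho_{i,t}$, the Lagrangian becomes $\mathcal{L}=\sum_{i,t}\big(1+\sum_k\lambda_k c_{i,k}+\mu_i-\lambda_t b_{i,t}\big)\rho_{i,t}+\text{const}$, where the constant collects $\sum_k\lambda_k\sigma_{\mathrm{DL}}^2-\sum_i\mu_i P_{\mathrm{max},i}$ and is independent of $\{\pmb{\rho}_t\}$.

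Next I would form the dual function $g(\{\lambda_k\},\{\mu_i\})=\inf_{\{\pmb{\rho}_t\succeq 0\}}\mathcal{L}$. As $\mathcal{L}$ is affine in each $\rho_{i,t}\ge 0$, this infimum is finite only when every coefficient is non-negative, giving the dual feasibility inequalities $1+\sum_k\lambda_k c_{i,k}+\mu_i-\lambda_t b_{i,t}\ge 0$, and then the minimizing $\rho_{i,t}$ may be set to zero whenever its coefficient is strictly positive. The matching complementary-slackness condition for the non-negativity constraint says that, at an optimum $\{\check\rho_{i,t},\check\lambda_k,\check\mu_i\}$, having $\check\rho_{i,t}>0$ forces that coefficient to vanish, i.e.\ $\check\lambda_t b_{i,t}=1+\sum_k\check\lambda_k c_{i,k}+\check\mu_i$. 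Dividing by $b_{i,t}$, which is positive since $\beta_{i,t}>0$, reproduces the ratio minimized in \eqref{eq: BS-Association}.

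To close the argument I would show $\check\lambda_t>0$: if $\check\lambda_t=0$ the identity above would read $1+\sum_k\check\lambda_k c_{i,k}+\check\mu_i=0$, which is impossible since every term is non-negative; hence $\check\rho_{i,t}=0$ for all $i$, contradicting the $k=t$ QoS constraint, which needs strictly positive received signal power. With $\check\lambda_t>0$, for every $i$ with $\check\rho_{i,t}>0$ we get $\check\lambda_t=\big(1+\sum_k\check\lambda_k c_{i,k}+\check\mu_i\big)/b_{i,t}$, whereas dual feasibility gives $\check\lambda_t\le\big(1+\sum_k\check\lambda_k c_{i,k}+\check\mu_i\big)/b_{i,t}$ for every $i$. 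Therefore $\check\lambda_t$ equals the minimum over $i$ of this ratio, and any BS serving user $t$ must attain that minimum, that is, lie in $\mathcal{S}_t$; this proves the main claim. The two bullet cases are then immediate from $|\mathcal{S}_t|$: a singleton $\mathcal{S}_t$ forces a unique serving BS, while $|\mathcal{S}_t|>1$ permits the optimal support to spread over $\mathcal{S}_t$, with the actual split fixed by the primal QoS and power constraints.

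I expect the one delicate point to be using the optimality condition proper to a variable confined to $\rho_{i,t}\ge 0$ --- a complementary-slackness/sign condition rather than an unconstrained stationarity equation --- together with the short argument that $\check\lambda_t>0$, which is what makes the ratio in \eqref{eq: BS-Association} well defined and the $\mathrm{argmin}$ characterization meaningful; the remaining work, namely collecting the coefficient of $\rho_{i,t}$ in $\mathcal{L}$, is routine.
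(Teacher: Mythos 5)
Your proposal is correct and follows essentially the same route as the paper: collect the coefficient of each $\rho_{i,t}$ in the Lagrangian \eqref{eq: Langrangian} (the ``first-order derivative'' step), require these coefficients to be non-negative so the dual problem is bounded from below, and combine complementary slackness with the observation that $\check{\lambda}_t>0$ to conclude that any BS with $\check{\rho}_{i,t}>0$ must attain the minimum of the ratio, which is exactly \eqref{eq: BS-Association}. The only discrepancy is that your derivation yields the single multiplier $\check{\mu}_i$ of BS $i$'s power constraint rather than the sum $\sum_{i=1}^{L}\check{\mu}_i$ written in \eqref{eq: BS-Association}; since that sum clashes with the $\mathrm{argmin}$ index $i$, your version is the correct reading of the statement and matches the intended result.
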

\begin{proof}
Conditions on the optimal Lagrange multipliers are obtained by taking the first-order derivative of the Lagrange function with respect to power variables. Thereafter \eqref{eq: BS-Association} is obtained by ensuring that the dual problem is bounded from below. The full proof is available in \cite{Chien2015}.
\end{proof}

The expression in \eqref{eq: BS-Association} explicitly shows that the optimal BS-user association is affected by many factors such as interference between BSs, noise variance, power allocation, large-scale fading, channel estimation quality and QoS constraints. There is no simple association rule since the function depends on the Lagrange multipliers, but we can be sure that the max-SNR association is not always optimal.

\section{Max-min QoS Optimization} \label{Section:Max-Min-QoS}

This section is inspired by the fact that there is not always a feasible solution to the power minimization problem with fixed QoS constraints in \eqref{eq: Linear-Solution-CVX}. For a certain network, it is not easy to select the target QoS values. Our vision is to supply a good target QoS for all users by maximizing the lowest QoS value, possibly with some user specific weighting \cite{Yang2014a}. Consequently, we formulate the max-min QoS problem as
\begin{equation} \label{Problem: Max-Min-QoS}
\begin{aligned}
& \underset{ \{ \rho_{i,t} \geq 0 \} }{\textrm{maximize}} \; \underset{k}{\textrm{min}}
& &  R_k / w_k  \\
& \textrm{subject to}
& & P_{ \mathrm{trans},i} \leq P_{\mathrm{max},i} \;, \forall i  ,\\
\end{aligned}
\end{equation}
where $w_k > 0$ is the specific weight for user $k$. The weights can be assigned based on for example information about the propagation, interference situation and user priorities. If there is no prior information regarding the users, they may be set to $1$. To solve \eqref{Problem: Max-Min-QoS}, it is converted to the epigraph form as
\begin{equation} \label{eq: WeightSpecific2}
\begin{aligned}
& \underset{ \{ \rho_{i,t} \geq 0 \}, \xi}{\textrm{maximize}}
& & \xi \\
& \textrm{subject to}
& & R_{k} / w_k  \geq \xi \;, \forall k \\
& & & P_{ \mathrm{trans},i} \leq P_{\mathrm{max},i} \;, \forall i,\\
\end{aligned}
\end{equation}
where $\xi$ is the minimum QoS parameter for the users that we aim to maximize. Note that we can solve \eqref{eq: WeightSpecific2} for a fixed $\xi$ as a linear program using Theorem \ref{Theorem: Linear-Solution} with $\xi_k = \xi w_k$. Since the QoS constraints are increasing functions of $\xi$, the solution to the max-min QoS problem can be obtained by doing a line search over the range $\mathcal{R}=[0,\xi_{0}^{\mathrm{upper}}]$, where $\xi_{0}^{\mathrm{upper}}$ is selected to make \eqref{eq: Linear-Solution-CVX} infeasible, to get the maximal value. Hence, this is a quasi-linear program. Note that we jointly maximize the minimum QoS and find the optimal BS-user association when solving \eqref{Problem: Max-Min-QoS} and \eqref{eq: WeightSpecific2}.

We use the bisection line search method \cite{Bjornson2013d} to obtain the solution. The problem \eqref{eq: WeightSpecific2} is solved in an iterative manner: by iteratively reducing the size of the search range and solve the problem \eqref{eq: Linear-Solution-CVX}, the maximum QoS level and optimal BS-user association can be simultaneously optimized. At each iteration, the feasibility of \eqref{eq: Linear-Solution-CVX} is verified with the  value $\xi^{\mathrm{candidate}} \in \mathcal{R}$, that is defined as the middle point of the current search range. If the problem is feasible, then its solution $\{\check{\pmb{\rho}}_k \}$, for $k=1, \ldots, K$, is assigned as the power allocation and the lower bound $\xi^{\textrm{lower}}$ is updated as well. Otherwise, if the problem is infeasible, then a new upper bound is set up. The algorithm will be terminated if the difference between the upper and lower limit of the search range is smaller than a line-search accuracy value $\delta$. 
The max-min QoS optimization is summarized in Algorithm \ref{Algorithm: Bisection}.
\vspace*{-0.3cm}
\begin{algorithm}[h]
\caption{Max-min QoS based on the bisection method}
\textbf{Result:} Solve optimization in \eqref{Problem: Max-Min-QoS}. 
\\ \textbf{Input:}  Initial upper bound $\xi_0^{\mathrm{upper}}$, and line-search accuracy $\delta$;
\begin{algorithmic}
\State Set  $\xi^{\mathrm{lower}}= 0$; $\xi^{\mathrm{upper}}= \xi_0^{\mathrm{upper}}$;
\While {$\xi^{\mathrm{upper}} - \xi^{\mathrm{lower}} > \delta$}
\State Set  $\xi^{\mathrm{candidate}} = \frac{\xi^{\mathrm{upper}}+\xi^{\mathrm{lower}}}{2}$;
\If { \eqref{eq: Linear-Solution-CVX} is infeasible for $ \xi_k = w_k \xi^{\mathrm{candidate}}, \forall k, $ } \do
\\
 \State Set $\xi^{\mathrm{upper}} = \xi^{\mathrm{candidate}}$;
\Else

\State Set $ \{ \check{\pmb{\rho}}_{k} \}$ as the solution to \eqref{eq: Linear-Solution-CVX};

\State Set $\xi^{\mathrm{lower}} = \xi^{\mathrm{candidate}}$ ;
\EndIf
\EndWhile
\State Set $\xi_{\mathrm{final}}^{\mathrm{lower}} = \xi^{\mathrm{lower}}$ and $\xi_{\mathrm{final}}^{\mathrm{upper}} = \xi^{\mathrm{upper}}$;
\end{algorithmic}
\textbf{Output:} Final interval $[\xi_{\mathrm{final}}^{\mathrm{lower}}, \xi_{\mathrm{final}}^{\mathrm{upper}}]$ and $ \{ \pmb{\rho}_{k} \}= \{ \check{\pmb{\rho}}_{k} \}$;
\label{Algorithm: Bisection}
\end{algorithm} 

\vspace*{-0.4cm}

\section{Numerical Results} \label{Section:Numerical-Results}

In this section, the analytical contributions from the previous sections are evaluated by simulation results for a Massive MIMO cellular system with MRT precoding. Our system deployment is sketched in Fig.~\ref{fig:MassiveMIMOSystem-Layout} comprising of $4$ BSs and $20$ users. For the max-min QoS algorithm, user specific weights are set to $w_k =1$, $ \forall k$, to make it easy to interpret the results. Since the joint power allocation and BS-user association obtains the optimal subset of BSs that associate with users, we denote it to as ``Optimal" in the figures. For comparison, we also consider a greedy method, in which each user is associated with only the BS that gives the strongest signal on the average (i.e., the max-SNR value). The performance is averaged over user locations. 

The peak DL radio frequency output power is $40$ W per BS. The system bandwidth is $20$ MHz and the coherence interval is of $200$ symbols. The users send the orthogonal pilot sequences whose length equals the number of users (i.e., 20 symbols) and has energy of $2\times 10^{-7}$ J. The shadow fading $z_{l,k}$ is generated by utilizing log-normal Gaussian distribution with standard deviation $7$ dB. The path loss at distance $d$ km is calculated by $148.1 + 37.6 \log_{10}d$. Consequently, the channel variance $\beta_{l,k}$ is computed as $\beta_{l,k} = -148.1 - 37.6 \log_{10} d + z_{l,k} $ dB. Moreover, with the noise figure of $5$ dB, the noise variance for both the UL and DL is $-96$ dBm.
\begin{figure}[t]
    \centering
    \includegraphics[trim=12.3cm 8.2cm 0.9cm 9.7cm, clip=true, width=1.9 in]{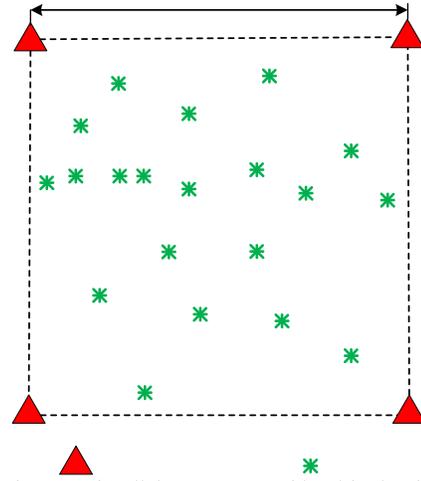}
    \caption{ Massive MIMO cellular system considered in the simulations: The BS locations are fixed, while $20$ users are randomly distributed over the joint coverage area of the BSs.}
    \label{fig:MassiveMIMOSystem-Layout}
    \vspace{-0.20cm}
\end{figure} 
\begin{figure}[t]
     \centering
     \includegraphics[trim=2.5cm 0.8cm 2cm 1cm, clip=true, width=2.76 in]{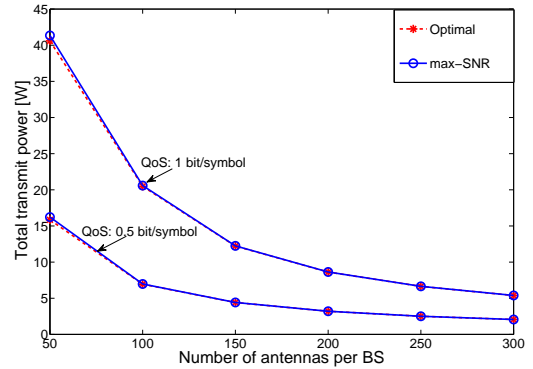}
     \caption{The total transmit power $(\sum_{i=1}^L P_i)$ versus the number of antennas at each BS.}
     \label{Fig-PowervsAntenna}
     \vspace{-0.4cm}
 \end{figure}
 \begin{figure}[t]
      \centering
      \includegraphics[trim=2.1cm 0.4cm 2cm 1.2cm, clip=true, width=2.76in]{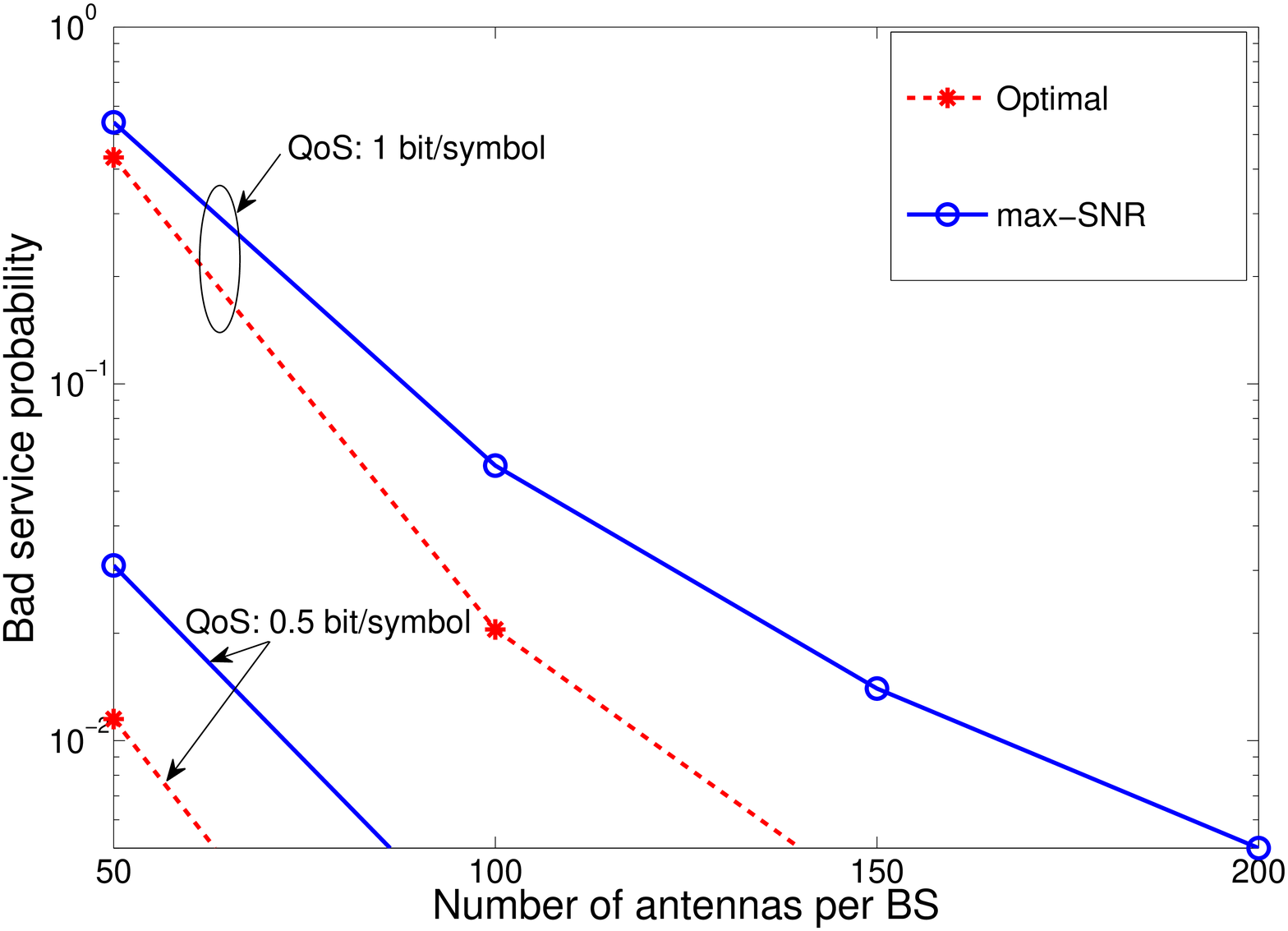}
      \caption{The bad service probability versus the number of BS antennas.}
      \label{Fig-InfeasibleAntennas}
      \vspace{-0.4cm}
  \end{figure}
\vspace*{-0.1cm}

We show the total transmit power as a function of the number of BS antennas in Fig.~\ref{Fig-PowervsAntenna}. For fair comparison, the results are averaged over only user locations where both association methods can satisfy the QoS constraints. The results reveal a superior reduction of the total transmit power compared to the peak one, say $160$ W, in small-scale MIMO cellular systems. In addition, we notice that the simple max-SNR association is close to optimal in these cases.

The difference between the optimal association and max-SNR association is seen from the fact that sometimes only the former can satisfy the QoS constraints. Fig.~\ref{Fig-InfeasibleAntennas} demonstrates the ``bad service probability" which is defined as the fraction of random user locations and shadow fading realizations in which the power minimization problem in Theorem \ref{Theorem: Linear-Solution} is infeasible. The optimal BS-user association is more robust to environment variations than the max-SNR association since the non-coherent joint transmission can help to resolve the infeasibility. In addition, the figure also verifies difficulties of providing high target SE for all the users, especially when the BSs have a small number of antennas or the users demand high QoS levels. This is a key motivation to consider the max-min QoS optimization problem instead, because it provides feasible solutions for any user locations and channel realizations.

  \begin{figure}[t]
        \centering
        \includegraphics[trim=2.0cm 0.8cm 2cm 1cm, clip=true, width=2.76 in]{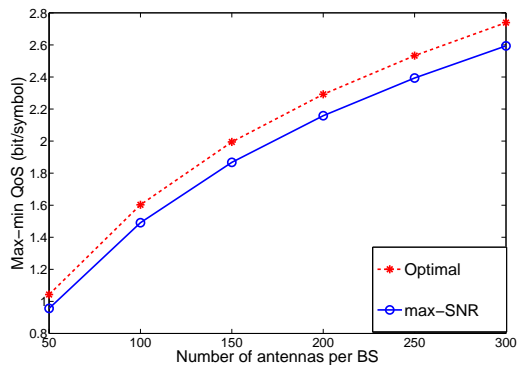}
        \caption{The max-min QoS level versus the number of BS antennas.}
        \label{Fig-PowerMaxMinQoSMRT}
        \vspace{-0.4cm}
    \end{figure}
    
  \begin{figure}[t]
         \centering
         \includegraphics[trim=2.0cm 0.8cm 2cm 1cm, clip=true, width=2.76 in]{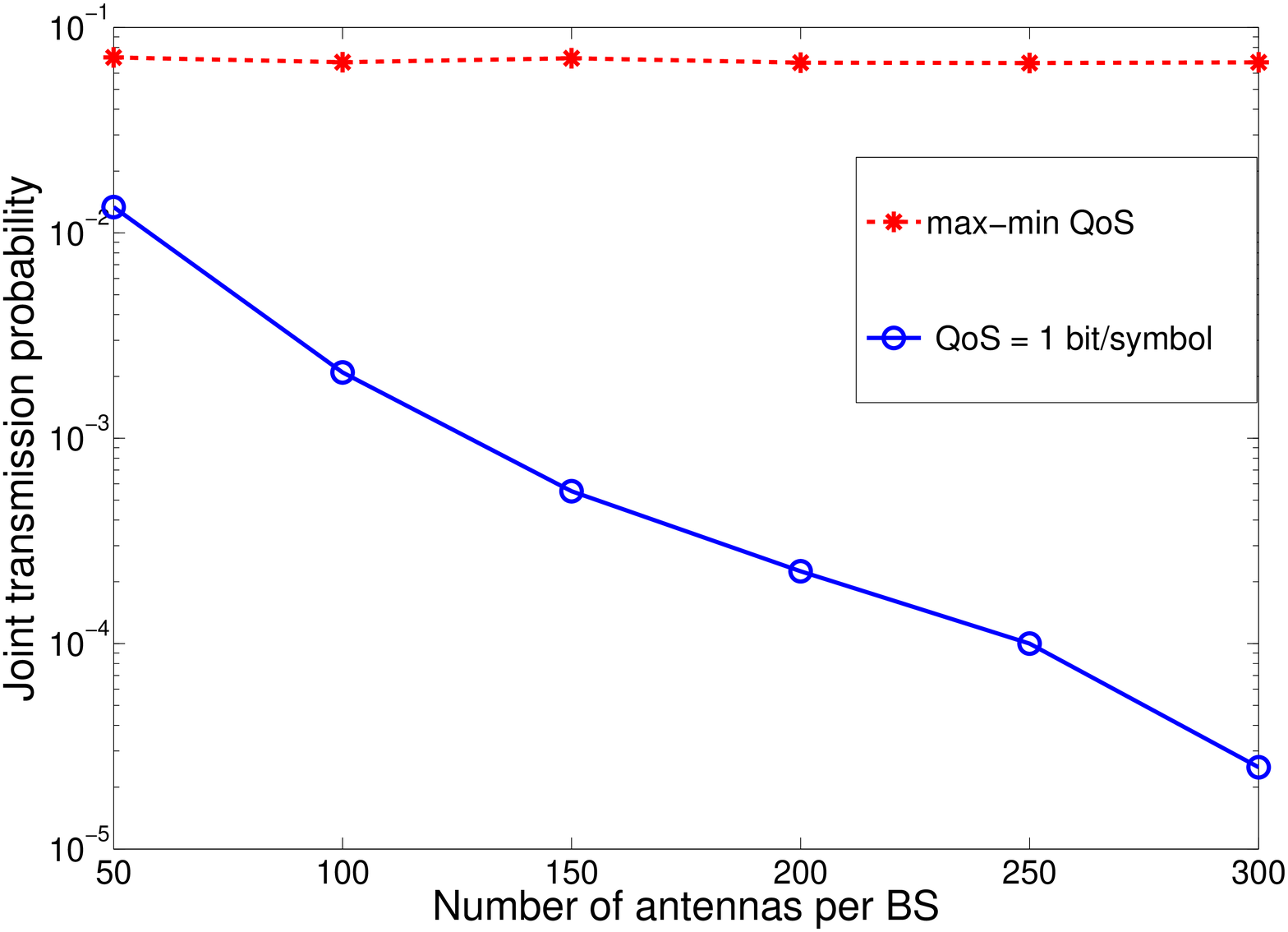}
         \caption{ The joint transmission probability versus the number of BS antennas.}
         \label{Fig-BSAssociationProbability}
         \vspace{-0.4cm}
     \end{figure}
     
Fig.~\ref{Fig-PowerMaxMinQoSMRT} shows the achievable max-min QoS level for all users as a function of the number of BS antennas. Massive MIMO is able to provide good QoS to every user. Roughly speaking, the optimal user association provides up to $10\%$ higher QoS than the max-SNR association. Besides, the probability that a user is served by more than one BS is shown in Fig.~\ref{Fig-BSAssociationProbability}. Even though the system model lets BSs cooperate with each other to serve the users, experimental results verify that single-BS association is optimal in $93 \%$ or more of the cases. This result for the Massive MIMO systems is similar to those obtained by the multi-tier heterogeneous network with multiple-antennas at BSs in \cite{Li2015}. Although the joint transmission probability is relatively independent of the number of BS antennas, the max-min SE is significantly improved if the BSs are equipped with massive antennas as shown in Fig.~\ref{Fig-PowerMaxMinQoSMRT}. Case $2$ in Theorem \ref{Theorem-BS-Association} might happen for example, when some users experience severe shadow fading realizations or there is a high user load at some BSs, which can only be resolved by joint transmission from multiple BSs. Thus, only the optimal BS-user association satisfies QoS constraints, while the max-SNR association falls in bad services as shown in Fig.~\ref{Fig-InfeasibleAntennas}.

\section{Conclusion}
The joint power allocation and BS-user association for the DL non-coherent joint transmission in Massive MIMO systems was designed to minimize the total transmit power consumption at the BSs while satisfying QoS constraints at the users. For Rayleigh fading channels, we proved that the total transmit power minimization problem with MRT precoding is a linear program, so it is always solvable to global optimality in polynomial time. Additionally, we provided the optimal BS-user association rule. In order to ensure that all users are fairly treated, we also solved the max-min QoS optimization problem. Numerical results manifested the effectiveness of our proposed methods, and that the max-SNR association works well in many Massive MIMO scenarios but is not optimal.

\vspace*{-0.15 cm}
\bibliographystyle{IEEEtran}
\bibliography{IEEEabrv,refs}
\end{document}